\newtheorem{theorem}{Theorem}[section]
\newtheorem{definition}[theorem]{Definition}
\newtheorem{proposition}[theorem]{Proposition}
\newtheorem{corollary}[theorem]{Corollary}
\newtheorem{lemma}[theorem]{Lemma}
\newtheorem{remark}[theorem]{Remark}
\newcommand{\id}{\ensuremath{\mathbbm{1}}}
\def\beq{\begin{equation}}
\def\eeq{\end{equation}}
\def\bq{\begin{quote}}
\def\eq{\end{quote}}
\def\ben{\begin{enumerate}}
\def\een{\end{enumerate}}
\def\bit{\begin{itemize}}
\def\eit{\end{itemize}}
\def\ra{\rightarrow}
\def\lb{\left(}
\def\rb{\right)}
\def\lset{\lbrace}
\def\rset{\rbrace}
\def\l|{\left|}
\def\r|{\right|}
\def\lbr{\left[}
\def\rbr{\right]}
\def\one{\id}
\newcommand\C{\mathbb{C}}
\newcommand\R{\mathbb{R}}
\newcommand\N{\mathbb{N}}
\begin{document}

\title[Restr. on the Schmidt rank of bipart. unitary op. beyond dim. two]{Restrictions on the Schmidt rank of bipartite unitary operators beyond dimension two}

\author{Alexander M\"uller-Hermes}
\address{AMH: Department of Mathematical Sciences, University of Copenhagen, Universitetsparken 5,  2100 Copenhagen, Denmark}
\email{muellerh@math.ku.dk and muellerh@posteo.net}

\author{Ion Nechita}
\address{IN: Zentrum Mathematik, M5, Technische Universit\"at M\"unchen, Boltzmannstrasse 3, 85748 Garching, Germany
and CNRS, Laboratoire de Physique Th\'{e}orique, IRSAMC, Universit\'{e} de Toulouse, UPS, F-31062 Toulouse, France}
\email{nechita@irsamc.ups-tlse.fr}

\subjclass[2000]{}
\keywords{}

\begin{abstract}
There are none.
\end{abstract}

\date{\today}

\maketitle

\tableofcontents

\section{Introduction}

In this paper, we study the possible values of an integer invariant of bipartite operators, the \emph{operator Schmidt rank}. 
For a non-zero operator $X\in \mathcal{M}_n(\mathbb C) \otimes \mathcal{M}_m(\mathbb C)$ the operator Schmidt rank is defined as the unique number $\Omega\lb X\rb\in\N$ such that
\[
X = \sum^{\Omega(X)}_{i=1} A_i\otimes B_i
\]
with orthogonal\footnote{with respect to the Hilbert-Schmidt inner product $\left\langle X,Y\right\rangle = \text{tr}\left( X^* Y\right)$ for $X,Y\in\mathbb{M}_n\left(\mathbb{C}\right)$} sets of non-zero operators $\left\lbrace A_i\right\rbrace^r_{i=1}\subset \mathcal{M}_n\left(\mathbb{C}\right)$ and $\left\lbrace B_i\right\rbrace^r_{i=1}\subset \mathcal{M}_m\left(\mathbb{C}\right)$. Note that $\Omega(X)\in\left\lbrace 1,\ldots ,\min(n,m)^2\right\rbrace$, and for general operators $X$ all these ranks can occurr. Here we are interested in the case of unitary operators $X$. The requirement of $X$ being unitary causes an interesting restriction in dimension two:

\begin{theorem}[\cite{dvc}]
For $U\in \mathcal{U}\left(\mathbb{C}^2\otimes \mathbb{C}^2\right)$ only ranks $\Omega(U)\in\left\lbrace 1,2,4\right\rbrace$ are possible. 
\end{theorem}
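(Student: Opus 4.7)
The plan is to reduce the problem to a canonical form by local unitary equivalence, and then rule out Schmidt rank $3$ by a direct algebraic computation. The key observation is that the operator Schmidt rank is invariant under $U \mapsto (V_1 \otimes V_2) U (W_1 \otimes W_2)$ for $V_i, W_i \in \mathcal{U}(\mathbb{C}^2)$, since such multiplications only act within the individual factors of a Schmidt decomposition. Combining this with the KAK (Cartan) decomposition of $SU(4)$, every unitary on $\mathbb{C}^2 \otimes \mathbb{C}^2$ is, up to a global phase and local unitaries, equivalent to
\[
U_a \;=\; \exp\!\big(i(a_1 X\otimes X + a_2 Y\otimes Y + a_3 Z\otimes Z)\big), \qquad a=(a_1,a_2,a_3)\in\R^3.
\]
It therefore suffices to analyze $\Omega(U_a)$.

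Next I would exploit the fact that the three generators $X\otimes X$, $Y\otimes Y$, $Z\otimes Z$ pairwise commute, so that $U_a$ factors as a product of three one-term exponentials $\cos(a_k)\,\id\otimes\id + i\sin(a_k)\,\sigma_k\otimes\sigma_k$. Multiplying these out and using the Pauli algebra (each product of two distinct $\sigma_k\otimes\sigma_k$ collapses to a signed copy of the third), one arrives at an expansion
\[
U_a \;=\; \alpha_0\,\id\otimes\id \,+\, \alpha_1\, X\otimes X \,+\, \alpha_2\, Y\otimes Y \,+\, \alpha_3\, Z\otimes Z,
\]
with explicit coefficients $\alpha_i$ that are degree-$3$ polynomials in $\cos(a_k)$ and $\sin(a_k)$. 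Since $\{\sigma_i\otimes\sigma_i\}_{i=0}^3$ is an orthogonal family in $\mathcal{M}_2(\mathbb{C})\otimes\mathcal{M}_2(\mathbb{C})$, the Schmidt rank $\Omega(U_a)$ is exactly the number of nonzero $\alpha_i$, and the problem reduces to showing that exactly three nonzero coefficients is impossible.

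To exclude this case I would verify the bilinear identities
\[
\alpha_0\alpha_k + \alpha_i\alpha_j \;=\; \tfrac{i}{2}\sin(2 a_k),
\]
where $(i,j,k)$ runs over the cyclic permutations of $(1,2,3)$; these follow by a direct computation using $\cos^2+\sin^2=1$. Now suppose exactly one coefficient $\alpha_\ell$ vanished. Then for each $k$ at least one summand on the left is a product of two nonzero $\alpha$'s, and the identities force $\sin(2 a_k)\neq 0$ for $k=1,2,3$, i.e.\ all $\sin(a_k)$ and $\cos(a_k)$ are nonzero. On the other hand, the real and imaginary parts of $\alpha_\ell$ are each real products of $\cos(a_k)$'s and $\sin(a_k)$'s, so $\alpha_\ell=0$ forces at least one cosine and at least one sine among the $a_k$ to vanish, a contradiction.

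I do not anticipate a serious conceptual obstacle; the difficulty is essentially bookkeeping. The most error-prone step will be expanding $\exp(iH)$ cleanly and verifying the bilinear identities while keeping the Pauli signs straight. A minor technical point is to justify the reduction to $U_a$ for all of $\mathcal{U}(\mathbb{C}^2\otimes\mathbb{C}^2)$ rather than only $SU(4)$, but since $\Omega$ is invariant under scalar multiplication this is harmless.
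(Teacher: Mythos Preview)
Your proposal is correct and follows precisely the route the paper indicates: the paper does not prove this theorem in detail but cites \cite{dvc} and remarks that the argument rests on the two-qubit normal form of \cite{kci}, which is exactly the KAK/Cartan decomposition you invoke. Your subsequent analysis of the coefficients $\alpha_i$ (including the bilinear identities $\alpha_0\alpha_k+\alpha_i\alpha_j=\tfrac{i}{2}\sin(2a_k)$ and the real/imaginary-part argument) is sound, so the proposal both matches the paper's sketched approach and fills in the details it omits.
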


The proof of the result above relies on the so-called \emph{normal form} of two-qubit unitary operations from \cite{kci}. In \cite{nie} further restrictions on the operator Schmidt ranks of higher dimensional unitary operators have been conjectured. However, in dimensions $n=m=3$ no such restrictions have been observed:

\begin{theorem}[{\cite[Theorem 10]{tys}}]
For $U\in \mathcal{U}\left(\mathbb{C}^3\otimes \mathbb{C}^3\right)$ all ranks $\Omega(U)\in\left\lbrace 1,2,\ldots , 9\right\rbrace$ are possible. 
\end{theorem}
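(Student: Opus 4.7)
The plan is to exhibit, for each $r \in \{1, 2, \ldots, 9\}$, an explicit bipartite unitary $U_r \in \mathcal{U}(\mathbb{C}^3 \otimes \mathbb{C}^3)$ with operator Schmidt rank exactly $r$. Throughout, the Schmidt rank of a given $U$ can be verified directly as the rank of the realignment matrix $R(U) \in M_9(\mathbb{C})$ with entries $R(U)_{(i,k),(j,l)} := \langle ij|U|kl\rangle$.

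The extremal cases are immediate. For $r = 1$, take $U_1 = I_3 \otimes I_3$. For $r = 9$, take the SWAP operator $S = \sum_{i,j=0}^{2} E_{ij} \otimes E_{ji}$; the families $\{E_{ij}\}$ and $\{E_{ji}\}$ are both bases of $M_3(\mathbb{C})$, so $R(S)$ is (up to reindexing) a permutation matrix of rank $9$. For $r \in \{2, 3\}$, use the controlled-unitary construction: writing $P_i := |i\rangle\langle i|$,
\[
U = \sum_{i=0}^{2} P_i \otimes V_i
\]
is unitary for every $V_0, V_1, V_2 \in \mathcal{U}(\mathbb{C}^3)$, and its operator Schmidt rank equals $\dim \operatorname{span}\{V_0, V_1, V_2\}$ because the $P_i$ are linearly independent. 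Selecting $V_0 = V_1 = I_3$ and $V_2 \notin \mathbb{C}\cdot I_3$ yields rank $2$; three linearly independent unitaries yield rank $3$.

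For the intermediate ranks $r \in \{4, 5, 6, 7, 8\}$, the idea is to compose two controlled unitaries with interchanged control and target registers,
\[
U = \Bigl(\sum_{i=0}^{2} P_i \otimes V_i\Bigr) \Bigl(\sum_{j=0}^{2} W_j \otimes P_j\Bigr) = \sum_{i,j=0}^{2} P_i W_j \otimes V_i P_j,
\]
which is unitary for arbitrary $V_i, W_j \in \mathcal{U}(\mathbb{C}^3)$. The Schmidt rank now equals the rank of $R(U)$, which depends on the linear dependencies among the $9$ left factors $P_i W_j$ and $9$ right factors $V_i P_j$; these are controlled by the six free unitaries $V_0, V_1, V_2, W_0, W_1, W_2$. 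By appropriate choices, for instance letting some of the $V_i$ or $W_j$ coincide or differ by suitable diagonal unitaries, one can hit each of the five integer values $4, 5, 6, 7, 8$. Alternatively, one can build ad hoc examples from block-diagonal structure adapted to a $2 \times 2$ subspace of $\mathbb{C}^3 \otimes \mathbb{C}^3$ combined with the known $2$-qubit constructions of rank $\in \{1,2,4\}$.

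The main obstacle is that operator Schmidt rank is a discrete and only lower-semicontinuous invariant: a one-parameter family of unitaries does not automatically sweep through all intermediate values, so each of the five intermediate ranks must be hit by a tailored construction. I expect the case analysis for $r \in \{4, 5, 6, 7, 8\}$ to be the tedious technical core of the proof; without a single unifying recipe, separate explicit examples (or numerical certificates for the realignment ranks) may be necessary for each intermediate value.
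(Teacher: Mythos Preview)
Your proposal correctly handles $r \in \{1,2,3,9\}$: the controlled-unitary construction is exactly Proposition~\ref{prop:diag-blocks}, and SWAP is a special case of the permutation unitaries. The substantive gap is for $r \in \{4,5,6,7,8\}$. You propose products of two controlled unitaries with swapped control and target, $U=\sum_{i,j} P_iW_j\otimes V_iP_j$, and assert that ``appropriate choices'' of the $V_i,W_j$ hit each value, but you neither exhibit such choices nor analyse the resulting rank. As you yourself acknowledge, this is where the content lies; without it, the proposal is a plan rather than a proof. Your fallback suggestion (embedding two-qubit unitaries block-diagonally) is likewise left unexecuted and, since the two-qubit case only furnishes ranks $1,2,4$, would itself require further work to reach $5,6,7,8$.

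The paper does not give a separate proof of this cited theorem, but its proof of the general Theorem~\ref{thm:main} specialises to $n=m=3$ and shows concretely what each intermediate rank demands. Ranks $3,5,6,7,9$ come from the permutation unitaries $U_{\alpha,\beta}$ of Theorem~\ref{thm:permutation}, where the Schmidt rank reduces to the purely combinatorial count $N(\alpha,\beta)$, for which explicit $\alpha,\beta$ are supplied. Rank $4$ comes from Theorem~\ref{thm:product} with $k=l=2$; this is in fact a special instance of your ansatz, with cyclic shifts for the $V_i,W_j$ and the rank-one projectors grouped into two diagonal projectors. Rank $8=n^2-1$, however, requires a genuinely different tool: the Fourier construction of Proposition~\ref{prop:existence-fourier}, a unitary diagonal in a magic basis whose Schmidt rank equals the number of nonzero Fourier coefficients of the phase function. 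This is precisely the value that escapes both the permutation and the product constructions, and there is no evident reason your two-controlled-unitary family can reach it; you would need either to demonstrate this or to supply an independent example.
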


These results raise the question, whether there are any restrictions on operator Schmidt ranks for bipartite unitaries in higher dimensions. In this paper, we prove the following result:

\begin{theorem}\label{thm:main}
For $U\in \mathcal{U}\left(\mathbb{C}^n\otimes \mathbb{C}^m\right)$ and $(n,m)\neq (2,2)$ all ranks $\Omega(U)\in\left\lbrace 1,2,\ldots , \min(n,m)^2\right\rbrace$ are possible.
\end{theorem}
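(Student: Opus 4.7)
The plan is to combine explicit constructions in low dimensions with two lifting lemmas that propagate Schmidt ranks to higher-dimensional bipartitions. I first dispose of $(n,m) = (2, m)$ with $m \geq 3$, where the only nontrivial Schmidt rank to realise is $3$. For this I exhibit the family
\[
U_V = \frac{1}{\sqrt{2}}\bigl(I_2 \otimes I_m + |0\rangle\!\langle 1| \otimes V - |1\rangle\!\langle 0| \otimes V^{*}\bigr), \qquad V \in \mathcal{U}(\mathbb{C}^m),
\]
whose unitarity is a short computation and whose Schmidt rank equals $\dim\operatorname{span}\{I_m, V, V^{*}\}$, since the $A$-side operators $I_2, |0\rangle\!\langle 1|, |1\rangle\!\langle 0|$ are pairwise Hilbert--Schmidt orthogonal in $\mathcal{M}_2$. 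Picking $V$ diagonal with three distinct non-real eigenvalues (e.g.\ $1, i, -i$ padded by $1$'s) forces this span to be three-dimensional whenever $m \geq 3$. Ranks $1, 2, 4$ on $(2, m)$ come from product, controlled, and generic unitaries.

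The core of the proof consists of two lifting lemmas. The \emph{tensoring lemma} states that if $V \in \mathcal{U}(\mathbb{C}^n \otimes \mathbb{C}^m)$ has Schmidt rank $r$, so does $V \otimes I_k \in \mathcal{U}(\mathbb{C}^n \otimes \mathbb{C}^{mk})$, because $\{B_i \otimes I_k\}_i$ remains orthogonal in $\mathcal{M}_{mk}$ whenever $\{B_i\}_i$ is orthogonal in $\mathcal{M}_m$. The \emph{direct sum lemma} states that if $V \in \mathcal{U}(\mathbb{C}^n \otimes \mathbb{C}^m)$ and $W \in \mathcal{U}(\mathbb{C}^n \otimes \mathbb{C}^k)$ have $A$-Schmidt spans $S_V, S_W \subseteq \mathcal{M}_n$, then $V \oplus W \in \mathcal{U}(\mathbb{C}^n \otimes \mathbb{C}^{m+k})$ has Schmidt rank exactly $\dim \operatorname{span}(S_V \cup S_W)$. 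The key point is that the $B$-Schmidt operators of $V$ and $W$ live in orthogonal subspaces of $\mathbb{C}^{m+k}$, so the combined Schmidt expansion of $V \oplus W$ has orthogonal $B$-operators and its Schmidt rank is determined entirely by the $A$-side spans.

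Assembly proceeds by induction. For $(3, m)$ with $m \geq 3$, I start from Theorem~1.2, which realises all Schmidt ranks $\{1, \ldots, 9\}$ on $(3, 3)$, and at each inductive step apply the direct sum lemma with an auxiliary $W \in \mathcal{U}(\mathbb{C}^3) = \mathcal{U}(\mathbb{C}^3 \otimes \mathbb{C})$: by choosing the single $A$-Schmidt operator $W_A$ either inside or outside $S_V$ we realise Schmidt rank $r$ or $r+1$, which is enough to cover $\{1, \ldots, 9\}$. For $(n, m)$ with $n \geq 4$ I induct in two phases: first settle the diagonal case $(n, n)$, where block-diagonal lifts of $(n-1, n-1)$ unitaries cover Schmidt ranks up to roughly $(n-1)^2 + 2$ while the remaining higher ranks require explicit non-block-diagonal constructions; then extend to $(n, m)$ with $m > n$ via the direct sum lemma exactly as in the $n = 3$ case. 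At every step one must moreover ensure that the current $V$ admits a unitary in its $A$-Schmidt span (so that $W_A$ can be placed inside or outside $S_V$ at will), which can be arranged by an appropriate local conjugation $V \mapsto (U_A \otimes U_B) V$.

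The main obstacle is supplying, for each $n \geq 4$ and each $r \in \{(n-1)^2 + 1, \ldots, n^2\}$, an explicit unitary on $(n, n)$ of Schmidt rank exactly $r$ whose $A$-Schmidt span contains a unitary. Such operators cannot arise from block-diagonal lifts of smaller cases, so genuinely new constructions are needed---for instance explicit one-parameter families parametrised by angles and by a chosen subspace of $\mathbb{C}^n$, whose realignment rank can be computed directly. This step is the technically most delicate portion of the argument and is where the contrast with Theorem~1.1 (which forbids rank $3$ on $(2,2)$) is most transparent: in higher dimensions, the abundance of orthogonal operators on the smaller factor allows every intermediate Schmidt rank to be realised by adjusting the overlap of $A$-Schmidt spans in the direct sum construction.
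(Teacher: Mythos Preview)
Your lifting lemmas and the treatment of the $(2,m)$ case are correct, and your reduction from the unbalanced case $(n,m)$ to the balanced case $(n,n)$ via direct sums is essentially the same device the paper uses in its Theorem~\ref{thm:unbalanced}. The genuine gap is in the balanced case itself. As you yourself flag, your inductive scheme can only reach Schmidt ranks up to about $(n-1)^2+2$ on $\mathbb{C}^n\otimes\mathbb{C}^n$ from the $(n-1,n-1)$ data; for every $n\geq 4$ this leaves roughly $2n-3$ ranks in the window $\{(n-1)^2+3,\ldots,n^2\}$ with no construction. You describe this as ``the technically most delicate portion'' and gesture at ``one-parameter families,'' but no actual operator is written down. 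That window is not a detail---it is the entire content of the theorem. The paper fills it with three independent and non-obvious constructions: a permutation construction (covering $\{n,\ldots,n^2\}\setminus\{n+1,n^2-1\}$), a product/shift construction (for $n+1$), and a Fourier-type construction (for $n^2-1$, which is particularly delicate). None of these are suggested by your outline, and there is no reason to expect the vague ``one-parameter families'' to produce, say, rank exactly $n^2-1$.

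There is also a smaller unverified step in your induction: to keep the Schmidt rank fixed under a direct-sum lift you need the $A$-Schmidt span of the current $V$ to contain a unitary matrix. You assert this ``can be arranged by an appropriate local conjugation,'' but local conjugation only replaces $S_V$ by $U_A S_V U_A'$, and it is not clear that an arbitrary operator subspace of $\mathcal{M}_n$ arising as a Schmidt span of a unitary can always be conjugated to contain a unitary. Without this, even the low-rank part of the induction on $(n,n)$ may skip values.
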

\begin{proof}
The proof consists of a collection of examples, detailed in the following sections. First, we show that the result holds in the ``balanced'' case $n=m$, as follows:
\begin{itemize}
\item Ranks $\{1, \ldots, n\}$: Proposition \ref{prop:diag-blocks}
\item Ranks $\{n, \ldots, n^2\} \setminus \{n+1,n^2-1\}$: Theorem \ref{thm:permutation}
\item Rank $n+1$, for odd $n$: Corollary \ref{cor:product-odd}
\item Rank $n+1$, for even $n$: Proposition \ref{prop:product-even}
\item Rank $n^2-1$: Proposition \ref{prop:existence-fourier}.
\end{itemize}
The general case $n \neq m$ is proven in Theorem \ref{thm:unbalanced}.
\end{proof}

\bigskip

\noindent \emph{Acknowledgments.} We thank Chris Perry for interesting discussions. AMH acknowledges financial support from the European Research Council (ERC Grant Agreement no 337603), the Danish Council for Independent Research (Sapere Aude), the Swiss National Science Foundation (project no PP00P2 150734), and the VILLUM FONDEN via the QMATH Centre of Excellence (Grant No. 10059). IN's research has been supported by the ANR project {StoQ} {ANR-14-CE25-0003-01}, as well as by a von Humboldt fellowship.

\section{Preliminaries}

In this paper, we shall fix a basis $\{e_1, e_2, \ldots, e_n\}$ of $\mathbb C^n$, and we put $e_0:=e_n$. Given an operator $U \in X\in \mathcal{M}_n(\mathbb C) \otimes \mathcal{M}_m(\mathbb C)$, we denote by $X_{ij}$ its $m \times m$ blocks:
$$X = \sum_{i,j=1}^n e_ie_j^* \otimes X_{ij}.$$

\begin{proposition}\label{prop:rank-blocks}
For any operator $X \in \mathcal M_n(\mathbb C) \otimes \mathcal M_n(\mathbb C)$ having blocks $X_{ij}$, we have
$$\Omega(X)  = \dim \operatorname{span}\{X_{ij}\}_{i,j=1}^n.$$
\end{proposition}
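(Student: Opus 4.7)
The plan is to prove the two inequalities $\Omega(X) \geq \dim \operatorname{span}\{X_{ij}\}$ and $\Omega(X) \leq \dim \operatorname{span}\{X_{ij}\}$ separately, both using the canonical expansion $X = \sum_{i,j=1}^n E_{ij} \otimes X_{ij}$, where $E_{ij} := e_i e_j^*$ form an orthogonal basis of $\mathcal{M}_n(\mathbb{C})$ with respect to the Hilbert-Schmidt inner product. In this expansion the blocks $X_{ij}$ play the role of coefficients, and the proposition is essentially the statement that the operator Schmidt rank of $X$ equals the rank of the realignment matrix built from these coefficients.

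For the lower bound I would start from any operator Schmidt decomposition $X = \sum_{k=1}^{\Omega(X)} A_k \otimes B_k$ with orthogonal $\{A_k\}$ and orthogonal $\{B_k\}$, and simply extract the $(i,j)$ block as
\[
X_{ij} \;=\; (e_i^* \otimes \id)\, X\, (e_j \otimes \id) \;=\; \sum_{k=1}^{\Omega(X)} (e_i^* A_k e_j)\, B_k.
\]
This exhibits every block as a linear combination of the $\Omega(X)$ operators $B_k$, so $\dim \operatorname{span}\{X_{ij}\} \leq \Omega(X)$.

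For the reverse inequality, set $d := \dim \operatorname{span}\{X_{ij}\}$ and form the $n^2 \times n^2$ matrix $M$ whose row indexed by $(i,j)$ is the vectorization of $X_{ij} \in \mathcal{M}_n(\mathbb{C}) \cong \mathbb{C}^{n^2}$. By construction $\operatorname{rank}(M) = d$, so the singular value decomposition provides orthonormal vectors $u_1, \ldots, u_d$ and $v_1, \ldots, v_d$ in $\mathbb{C}^{n^2}$, and positive singular values $\sigma_1, \ldots, \sigma_d$, such that $M = \sum_k \sigma_k u_k v_k^*$. Reshaping each $u_k$ and $v_k$ back into elements of $\mathcal{M}_n(\mathbb{C})$, and absorbing the singular values symmetrically, I obtain operators $A_k, B_k \in \mathcal{M}_n(\mathbb{C})$ satisfying $X = \sum_{k=1}^d A_k \otimes B_k$, with the Hilbert-Schmidt orthogonality of $\{A_k\}$ and $\{B_k\}$ inherited from that of $\{u_k\}$ and $\{v_k\}$. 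This gives $\Omega(X) \leq d$. The only potential pitfall is keeping the two different reshapings consistent, the block decomposition of $X$ on the one hand and the vectorization of each block on the other, so that the SVD of $M$ really does translate into an orthogonal operator decomposition of $X$; but this is a notational matter rather than a substantive obstacle.
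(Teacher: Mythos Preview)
Your proof is correct and follows essentially the same idea as the paper: both arguments hinge on the realignment matrix $\tilde X$ (your $M$), whose rows are the vectorizations of the blocks $X_{ij}$, so that $\operatorname{rk}(\tilde X) = \dim\operatorname{span}\{X_{ij}\}$ on one hand and $\operatorname{rk}(\tilde X) = \Omega(X)$ on the other. The paper simply asserts the latter identity and computes the row rank, whereas you spell out the two inequalities explicitly---the lower bound by reading off the blocks from a Schmidt decomposition, the upper bound by reshaping the SVD of $M$ into an orthogonal operator decomposition---but the underlying mechanism is identical.
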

\begin{proof}
We use the \emph{vectorization} operation, which ``flattens'' matrices to vectors:
$$\operatorname{vec}(e_ie_j^*) := e_i \otimes e_j.$$
Applying this operation to both factors of the tensor product, the expression
$$\tilde X = \sum_{i,j=1}^n (e_i \otimes e_j) \cdot\operatorname{vec}(X_{ij})^*$$
defines an operator $\tilde X$ with the property that $\operatorname{rk}(\tilde X) = \Omega(X)$. Since the vectors $e_i \otimes e_j$ form an orthonormal basis of $\mathbb C^n \otimes \mathbb C^n$, we have that
$$\Omega(X) = \dim \operatorname{span}\{\operatorname{vec}(X_{ij})\}_{i,j=1}^n = \dim \operatorname{span}\{X_{ij}\}_{i,j=1}^n.$$
\end{proof}
\begin{remark}
One can easily show that the operator Schmidt rank of an operator $X$ is also equal to the (usual) rank of the \emph{realignment} (or reshuffling) $X^R$ of $X$ (see e.g.~\cite[Section 10.2]{bzy} for the definitions). 
\end{remark}

\begin{proposition}\label{prop:diag-blocks}
For any rank $1 \leq r \leq n$, there exists a unitary operator $U \in \mathcal U(\mathbb C^n \otimes \mathbb C^n)$ such that $\operatorname{rk} U^R = r$. 
\end{proposition}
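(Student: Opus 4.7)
The approach is to exploit Proposition \ref{prop:rank-blocks}, which reduces the problem to constructing a unitary $U$ whose $n^2$ blocks $\{U_{ij}\}_{i,j=1}^n$ span a subspace of $\mathcal{M}_n(\mathbb{C})$ of the desired dimension $r$. The simplest way to ensure unitarity while controlling the block structure is to take $U$ block-diagonal, so that the off-diagonal blocks $U_{ij}$ vanish for $i \neq j$. In this case $\operatorname{span}\{U_{ij}\} = \operatorname{span}\{U_{11}, U_{22}, \ldots, U_{nn}\}$, and the question reduces to choosing $n$ unitaries $V_1,\ldots,V_n \in \mathcal{U}(\mathbb{C}^n)$ whose linear span has dimension exactly $r$.

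To achieve an arbitrary target $r \in \{1,\ldots,n\}$, I would pick $V_1,\ldots,V_r$ to be $r$ linearly independent unitaries and set $V_{r+1} = \cdots = V_n := V_1$, so that the span has dimension exactly $r$. A concrete and convenient choice is the family of diagonal unitaries
\[
V_k := \operatorname{diag}\!\brac{1,\omega^{k-1},\omega^{2(k-1)},\ldots,\omega^{(n-1)(k-1)}}, \qquad k=1,\ldots,n,
\]
where $\omega = e^{2\pi i / n}$; these are the columns of the unitary discrete Fourier transform (up to normalization) and therefore linearly independent in $\mathcal{M}_n(\mathbb{C})$. Setting
\[
U := \sum_{k=1}^{r} e_k e_k^* \otimes V_k \; + \; \sum_{k=r+1}^{n} e_k e_k^* \otimes V_1
\]
gives a unitary in $\mathcal{U}(\mathbb{C}^n \otimes \mathbb{C}^n)$ whose diagonal blocks span exactly $\operatorname{span}\{V_1,\ldots,V_r\}$, which has dimension $r$. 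Applying Proposition \ref{prop:rank-blocks} (and the remark identifying $\Omega(U)$ with $\operatorname{rk} U^R$) yields $\operatorname{rk} U^R = \Omega(U) = r$.

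There is no real obstacle here: the construction is explicit and the only thing to verify is linear independence of the chosen diagonal blocks, which is immediate from the invertibility of the Vandermonde/Fourier matrix. The reason this easy block-diagonal trick only suffices up to $r = n$ (and not beyond) is that a block-diagonal unitary has at most $n$ nonzero blocks, so this method cannot produce Schmidt ranks in $\{n+1,\ldots,n^2\}$ — those are handled by the later constructions listed in the proof of Theorem \ref{thm:main}.
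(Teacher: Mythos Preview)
Your proof is correct and essentially identical to the paper's: both construct a block-diagonal unitary $U=\sum_k e_ke_k^*\otimes V_k$ with $r$ linearly independent diagonal blocks and the remaining blocks repeated, then invoke Proposition~\ref{prop:rank-blocks}. The only cosmetic differences are that the paper repeats $V_r$ rather than $V_1$ and suggests Weyl operators as the concrete choice---your diagonal Fourier unitaries are in fact a special case of these (powers of the clock operator).
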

\begin{proof}
Choose $r$ linearly independent unitary operators $V_1, \ldots, V_r \in \mathcal U_n$ (say, any $r$ distinct Weyl operators \cite[Section 4]{fho}) and define
$$U = \sum_{i=1}^r e_ie_i^* \otimes V_i + \sum_{i=r+1}^n e_ie_i^* \otimes V_r.$$
Using Proposition \ref{prop:rank-blocks}, we have
$$\Omega(U) = \dim \operatorname{span}\{V_i\}_{i=1}^r = r.$$
\end{proof}

\section{Unitaries from permutations}

In this section we study the operator Schmidt ranks of unitary operators associated to permutations. The construction below contains all but two of the possible ranks from Theorem \ref{thm:main}. We think that the construction is interesting on its own, and raises non-trivial combinatorial questions regarding pairs of permutation tableaux; see e.g.~\cite{czu} for a connection with $r$-orthogonal Latin squares. We shall denote by $\mathcal S_n$ the symmetric group on $n$ elements. 

\begin{definition}
To any $\alpha,\beta\in\mathcal{S}^n_n$ we associate the unitary operator $U_{\alpha,\beta}\in\mathcal{U}\lb\C^n\otimes\C^n\rb$ by 
\[
U_{\alpha,\beta} = \sum_{i,j=0}^{n-1} e_i e_j^* \otimes e_{\alpha_i(j)} e_{\beta_j(i)}^*.
\]
\label{defn:PermUnitary}
\end{definition}

It is easy to check that the operator $U_{\alpha,\beta}$ is indeed unitary (see \cite[Lemma 3.3]{bne} for a proof). One can compute the operator Schmidt rank of $U_{\alpha,\beta}$ from the permutations alone. For $\alpha,\beta\in\mathcal{S}^n_n$ we define 
\[
N(\alpha,\beta) := \Big{|}\left\lbrace\left(\alpha_i(j),\beta_j(i)\right) \, : \, 0 \leq i,j \leq n-1\right\rbrace\Big{|}.
\]
Then, using the fact that the blocks of the operator $U_{\alpha,\beta}$ have just one non-zero entry, it is immediate to show the following lemma:

\begin{lemma}
For any $\alpha,\beta\in\mathcal{S}^n_n$ we have $\Omega\lb U_{\alpha,\beta}\rb = N(\alpha,\beta).$
\label{lem:CountingPerm}
\end{lemma}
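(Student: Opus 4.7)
The plan is to apply Proposition \ref{prop:rank-blocks} directly. The $(i,j)$-block of $U_{\alpha,\beta}$ is by definition
\[
(U_{\alpha,\beta})_{ij} = e_{\alpha_i(j)} e_{\beta_j(i)}^*,
\]
which is a \emph{matrix unit}: it has a single non-zero entry (equal to $1$), located at position $(\alpha_i(j),\beta_j(i))$. So the collection of blocks $\{(U_{\alpha,\beta})_{ij}\}_{i,j=0}^{n-1}$ is a multiset of matrix units indexed by the pairs $(\alpha_i(j),\beta_j(i))$.

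The key observation is that distinct matrix units $e_k e_l^*$ are pairwise orthogonal (with respect to the Hilbert--Schmidt inner product) and hence linearly independent, while two blocks $(U_{\alpha,\beta})_{ij}$ and $(U_{\alpha,\beta})_{i'j'}$ coincide if and only if $(\alpha_i(j),\beta_j(i)) = (\alpha_{i'}(j'),\beta_{j'}(i'))$. Consequently, the dimension of the span of the blocks equals the number of \emph{distinct} matrix units appearing in the list, which is precisely the cardinality
\[
N(\alpha,\beta) = \Big|\{(\alpha_i(j),\beta_j(i)) \, : \, 0 \leq i,j \leq n-1\}\Big|.
\]

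Combining this with Proposition \ref{prop:rank-blocks}, we obtain $\Omega(U_{\alpha,\beta}) = \dim\operatorname{span}\{(U_{\alpha,\beta})_{ij}\} = N(\alpha,\beta)$. There is no real obstacle here; the statement is essentially a bookkeeping consequence of Proposition \ref{prop:rank-blocks} together with the sparsity of the blocks. The only thing to be mildly careful about is to phrase the identification of ``distinct blocks'' with ``distinct index pairs'' cleanly, which the matrix-unit structure makes transparent.
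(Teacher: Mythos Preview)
Your proof is correct and follows exactly the approach the paper indicates: the authors do not write out a detailed argument but state that the result is immediate from the fact that each block of $U_{\alpha,\beta}$ has a single non-zero entry, which is precisely the matrix-unit observation you use together with Proposition~\ref{prop:rank-blocks}.
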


Using the previous lemma we will prove the following theorem.

\begin{theorem}\label{thm:permutation}
For $n>2$ and any $r\in\lset n,\ldots ,n^2\rset\setminus\lset n+1,n^2-1\rset$ there exist $\alpha,\beta\in\mathcal{S}^n_n$ such that $\Omega\lb U_{\alpha,\beta}\rb = r$.
\end{theorem}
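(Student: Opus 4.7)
The plan is to fix $\alpha_i(j) := i + j \pmod n$ throughout, so that each $\alpha_i$ is a cyclic shift and hence a permutation, and reduce the problem to choosing $\beta_0, \ldots, \beta_{n-1} \in \mathcal S_n$. The pair at cell $(i,j)$ is then $(i + j \bmod n, \beta_j(i))$, and since pairs with different first coordinates $k := i+j$ are automatically distinct, Lemma~\ref{lem:CountingPerm} gives the anti-diagonal sum
\[
N(\alpha, \beta) = \sum_{k=0}^{n-1} d_k, \qquad d_k := \Big|\{\beta_{k - i \bmod n}(i) : i = 0, \ldots, n-1\}\Big|.
\]

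Next I would introduce a two-parameter family of $\beta$'s. Fix a set $R \subseteq \mathbb Z / n$ of size $s \in \{1, \ldots, n-1\}$ and a permutation $\tau \in \mathcal S_n$, and define $\beta_j(i) := i + t_j \pmod n$ for $j = 0, \ldots, n-2$, where the shifts $t_j$ are chosen so that $\{t_j - j \bmod n : j = 0, \ldots, n-2\} = R$ as a set, and set $\beta_{n-1} := \tau$. A direct calculation on each anti-diagonal (substituting $j = k-i$) shows that $d_k \in \{s, s+1\}$, with the larger value occurring precisely when $\tau(k+1)$ lies outside $\{k + r \bmod n : r \in R\}$. Summing yields the clean formula
\[
N = n(s+1) - c, \qquad c := \Big|\{i : i - \tau(i) \in 1 - R \pmod n\}\Big|.
\]

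The next step is to verify that, as $(s, R, \tau)$ varies, the achievable pairs $(s, c)$ sweep out exactly the range needed for $r \in \{n, \ldots, n^2\} \setminus \{n+1, n^2-1\}$. For $s = 1$ (singleton $R$), $c$ is the number of fixed points of $\tau$ composed with a cyclic shift, so $c$ takes every value in $\{0, 1, \ldots, n-2\} \cup \{n\}$, giving $N \in \{n\} \cup \{n+2, \ldots, 2n\}$, which misses $n+1$. Symmetrically, $s = n-1$ gives $N \in \{n^2 - n, \ldots, n^2 - 2\} \cup \{n^2\}$, missing $n^2 - 1$. For intermediate $s \in \{2, \ldots, n-2\}$ (relevant only when $n \geq 4$) I would argue that every $c \in \{0, 1, \ldots, n\}$ is achievable, so that $N$ fills the interval $[ns, n(s+1)]$. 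Overlapping these intervals recovers precisely the target set, and the two excluded values correspond exactly to the combinatorially forbidden pairs $(s, c) = (1, n-1)$ and $(n-1, 1)$.

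The main obstacle will be verifying the intermediate-$s$ claim: for every $2 \leq s \leq n - 2$ and every $c \in \{0, 1, \ldots, n\}$, one must produce a permutation $\tau$ with the prescribed value of $c$. The extremes $c = 0$ and $c = n$ follow from Hall's marriage theorem applied to the $s$-regular (respectively $(n-s)$-regular) bipartite graph on $[n] \sqcup [n]$ whose edges $(i, v)$ encode the relation $v - i \in R - 1 \pmod n$. Intermediate values should follow by interpolation: starting from a matching realizing one extreme and performing local exchanges of $\tau$ (transpositions or $3$-cycles) that change $c$ by $\pm 1$, exploiting the fact that for $2 \leq s \leq n-2$ both the bipartite graph and its complement contain enough edges to allow single-unit adjustments. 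The degenerate cases $s \in \{1, n-1\}$ reduce to the pigeonhole fact that no element of $\mathcal S_n$ has exactly $n-1$ fixed points, which is precisely the source of the exclusions $r = n+1$ and $r = n^2 - 1$.
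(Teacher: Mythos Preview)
Your approach is genuinely different from the paper's. The paper proceeds by induction on $n$: two ad hoc lemmata cover $r\in\{n,\ldots,2n\}\setminus\{n+1\}$ and $r\in\{2n,\ldots,3n-1\}$, and then an ``add a fixed point'' step lifts a rank-$r$ example in dimension $n-1$ to a rank-$(r+2n-1)$ example in dimension $n$. Your construction is direct and more uniform: fixing $\alpha_i(j)=i+j$ and taking $\beta_0,\ldots,\beta_{n-2}$ to be shifts with prescribed image set $R$ and $\beta_{n-1}=\tau$ yields the clean closed form $N=n(s+1)-c$, and your $s=1$ case is essentially the paper's first lemma. This parametrisation is attractive because a single formula replaces the induction, and the two exclusions $n+1,\,n^2-1$ emerge transparently from the forbidden value ``$n-1$ fixed points''.

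There is, however, a real gap in the interpolation step. You propose to fix $R$ (hence $T=1-R$) and walk from $c=n$ to $c=0$ by composing $\tau$ with transpositions or $3$-cycles that change $c$ by $\pm1$. This fails for some $R$: take $n=4$ and $T=\{0,2\}$, so that $i-\tau(i)\in T$ iff $\tau(i)\equiv i\pmod 2$. A short count shows $c=2a$ where $a$ is the number of even $i$ with $\tau(i)$ even; hence $c$ is \emph{always even} and no local exchange can reach $c\in\{1,3\}$. More generally, whenever $T$ is a union of cosets of a nontrivial subgroup of $\mathbb Z/n$ there is a congruence obstruction on $c$, so ``enough edges in the graph and its complement'' does not by itself guarantee single-unit adjustments. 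The fix is easy once noticed: either let $R$ vary with $c$ (your framework already allows this), or pin down $T$ as the interval $\{0,\ldots,s-1\}$ and replace the interpolation sketch by explicit permutations --- e.g.\ the $k$-cycle $(0\,1\,\cdots\,k{-}1)$ gives $c=n-k$ for $k\ge s+1$ and $c=n-k+1$ for $2\le k\le s$, and a single transposition such as $(0\,s)$ supplies the one missing value. With that repair your argument goes through and is arguably tidier than the paper's induction.
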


The proof of Theorem \ref{thm:permutation} involves several constructions of permutations $\alpha,\beta\in\mathcal{S}^n_{n}$ presented in the following lemmata. 
We shall use $c\in\mathcal{S}_n$ to denote the cyclic shift $c(i) = i+1\text{ mod }n$. 

\begin{lemma}
For $n\in\N$ and any $r\in\lset n,\ldots ,2n\rset\setminus\lset n+1\rset$ there exist $\alpha,\beta\in\mathcal{S}^n_n$ such that $N(\alpha,\beta) = r$.
\label{lem:Construction1}
\end{lemma}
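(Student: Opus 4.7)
The plan is to take all the $\alpha$-permutations and all but one of the $\beta$-permutations to be cyclic shifts, which forces the pairs $(\alpha_i(j),\beta_j(i))$ to collapse onto the diagonal of $\{0,\dots,n-1\}^2$, and to let the remaining $\beta$-permutation be a free parameter $\pi\in\mathcal{S}_n$ that controls exactly how many pairs appear off the diagonal. Concretely, I would set $\alpha_i := c^i$ for every $i\in\{0,\dots,n-1\}$, $\beta_j := c^j$ for every $j\in\{1,\dots,n-1\}$, and $\beta_0 := \pi$.

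With this choice, for any $(i,j)$ with $j\geq 1$ one has $(\alpha_i(j),\beta_j(i))=(i+j\bmod n,\, i+j\bmod n)$, so as $(i,j)$ ranges over all such index pairs these contribute precisely the diagonal $D:=\{(k,k):0\leq k\leq n-1\}$. For $j=0$, the pair is instead $(i,\pi(i))$, giving the graph $G_\pi:=\{(i,\pi(i)):0\leq i\leq n-1\}$ of $\pi$. Hence the image of $(i,j)\mapsto(\alpha_i(j),\beta_j(i))$ is $D\cup G_\pi$, and since $D\cap G_\pi$ is exactly the set of fixed points of $\pi$, Lemma~\ref{lem:CountingPerm} yields
\[
\Omega(U_{\alpha,\beta})=N(\alpha,\beta)=|D|+|G_\pi|-|D\cap G_\pi|=n+k,
\]
where $k$ denotes the number of non-fixed points of $\pi$.

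To finish, it suffices to realise every admissible value of $k$: take $\pi=\mathrm{id}$ for $k=0$, a transposition for $k=2$, and a $k$-cycle for each $3\leq k\leq n$. The value $k=1$ is unattainable because a permutation with $n-1$ fixed points automatically fixes the last point too, which is exactly why the rank $r=n+1$ is excluded from this lemma and has to be treated separately in Corollary~\ref{cor:product-odd} and Proposition~\ref{prop:product-even}. The argument is essentially a direct verification once the construction is in hand; the only thing to watch is that the cyclic-shift backbone really does force every pair with $j\geq 1$ onto a single common copy of $D$, rather than onto $n-1$ distinct shifted diagonals.
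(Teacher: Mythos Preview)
Your proof is correct and follows essentially the same approach as the paper: both use a cyclic-shift backbone so that all but one row/column of index pairs land on the diagonal $D$, and a single free permutation $\pi$ whose fixed-point count controls $N(\alpha,\beta)=2n-\#\{\text{fixed points of }\pi\}$. The only cosmetic difference is that the paper places the free permutation in $\alpha_0$ (with $\beta_0=\mathrm{id}$), while you place it in $\beta_0$; the resulting counting is identical.
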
 
\begin{proof}
For some permutation $\pi\in\mathcal{S}_n$ consider $\alpha,\beta\in\mathcal{S}^n_n$ given by
\begin{align*}
&\alpha_0 = \pi,~ \alpha_1 = c^1,~ \ldots ,~ \alpha_{n-1} = c^{n-1}\\
&\beta_0 = \text{id},~ \beta_1 = c^1,~ \ldots,~ \beta_{n-1} =c^{n-1}.
\end{align*}
With these permutations we have 
\[
\lb\alpha_i(j),\beta_j(i)\rb = \begin{cases} \lb \pi(j),j\rb, &\text{ if }i=0 \\
\lb i+j\text{ mod } n, i+j\text{ mod }n\rb, &\text{ else. } \end{cases}
\]
Let $l\in\lset 0,\ldots ,n\rset\setminus\lset n-1\rset$ be the number of fixed points of $\pi$, then we have 
\[
N(\alpha,\beta) = \Big{|}\left\lbrace(\alpha_i(j),\beta_j(i)) \, : \, 0 \leq i,j \leq n-1\right\rbrace\Big{|} = 2n -l. 
\]
As there are permutations $\pi\in\mathcal{S}_n$ having $l$ fixed points for any $l \in \lset 1,\ldots ,n\rset\setminus\lset n-1\rset$, the proof is finished.
\end{proof} 

The next lemma will be proved using a similar, but slightly more complicated construction.

\begin{lemma}
For $n>3$ and any $r\in\lset 2n,\ldots ,3n-1\rset$ there exist $\alpha,\beta\in\mathcal{S}^n_n$ such that $N(\alpha,\beta) = r$.
\label{lem:Construction2}
\end{lemma}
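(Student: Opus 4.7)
The plan is to extend the construction of Lemma \ref{lem:Construction1} by freeing up a second row of $\alpha$ in addition to $\alpha_0$. Set
\[
\alpha_0 = \pi,\quad \alpha_1 = \sigma,\quad \alpha_i = c^i \text{ for } i \geq 2,\quad \beta_j = c^j \text{ for all } j,
\]
for two permutations $\pi,\sigma \in \mathcal{S}_n$ to be chosen. For $i \geq 2$ the pair $\lb \alpha_i(j), \beta_j(i)\rb = (i+j, i+j)\text{ mod }n$ sweeps out the full diagonal $D = \lset (k,k) : 0 \leq k \leq n-1 \rset$ as $j$ varies; for $i = 0$ it produces $G_\pi := \lset (\pi(j), j) : j \rset$; and for $i = 1$ it produces $\lset (\sigma(j), 1+j\text{ mod }n) : j\rset$, which after the substitution $k = j+1$ is the graph $G_\tau := \lset (\tau(k),k) : k\rset$ of $\tau := \sigma \circ c^{-1}$. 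By Lemma \ref{lem:CountingPerm},
\[
N(\alpha,\beta) = \abs{D \cup G_\pi \cup G_\tau},
\]
and $\pi,\tau$ may be chosen independently.

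Now fix $\pi = c$, so that $G_\pi$ is disjoint from $D$ and $\abs{D \cup G_\pi} = 2n$. A pair $(\tau(j), j) \in G_\tau$ lies in $D \cup G_\pi$ exactly when $\tau(j) \in \lset j, j+1\text{ mod }n\rset$; calling such $j$ \emph{good} and letting $s$ denote the number of good positions, one gets $N(\alpha,\beta) = 2n + (n - s) = 3n - s$. Thus the problem reduces to realizing every $s \in \lset 1, 2, \ldots, n\rset$ by some $\tau \in \mathcal{S}_n$.

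I would construct $\tau$ in three families:
\begin{itemize}
\item $s = n$: $\tau = \mathrm{id}$.
\item $s = n - 1$: $\tau$ is the transposition $(0\ 1)$; position $1$ is bad since $\tau(1) = 0 \notin \lset 1,2\rset$ when $n \geq 3$, and every other position is fixed and good.
\item $s = n - k$ with $k \in \lset 2, \ldots, n-1\rset$: $\tau$ is the $k$-cycle $(0, k, k-1, \ldots, 2)$, fixing $1$ and every $j > k$, sending $0 \mapsto k$, $k \mapsto k-1, \ldots, 3 \mapsto 2, 2 \mapsto 0$.
\end{itemize}
A direct check then shows that all $k$ cycled positions are bad: $\tau(0) = k \notin \lset 0,1\rset$ since $k \geq 2$; $\tau(j) = j - 1 \notin \lset j, j+1\rset$ for $3 \leq j \leq k$ (using $n \geq 3$); and $\tau(2) = 0 \notin \lset 2, 3\rset$ (which is where the hypothesis $n \geq 4$ is used, since $0 \equiv 3\text{ mod }n$ would require $n = 3$). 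Hence $N(\alpha,\beta) = 3n - s$ is realized for every $s \in \lset 1, \ldots, n\rset$, covering all of $\lset 2n, 2n+1, \ldots, 3n-1\rset$.

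The main subtle point is the cyclic wraparound in the good set $\lset j, j+1\text{ mod }n\rset$: the boundary position $j = 2$ fails to be bad in the $k$-cycle exactly when $0 \equiv 3 \text{ mod } n$, and this is precisely why the hypothesis $n > 3$ enters. For $n = 3$ the construction would collapse and an ad hoc argument would be required, but this case is excluded by assumption.
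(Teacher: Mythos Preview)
Your argument is correct. The setup reduces the problem cleanly to realizing every value $s\in\{1,\ldots,n\}$ as the number of positions $j$ with $\tau(j)\in\{j,\,j{+}1\bmod n\}$, and your three families (identity, the transposition $(0\;1)$, and the $k$-cycles $(0,k,k{-}1,\ldots,2)$) do cover all these values when $n\geq 4$. The one cosmetic slip is the reference to Lemma~\ref{lem:CountingPerm} for the identity $N(\alpha,\beta)=\abs{D\cup G_\pi\cup G_\tau}$; that equality is just the definition of $N$, not the content of that lemma.

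Your route differs from the paper's, though both are in the same spirit of adding a second degree of freedom to the construction of Lemma~\ref{lem:Construction1}. The paper instead sets $\alpha_0=\alpha_1=c$ and frees the single \emph{column} permutation $\beta_0=\pi$, then handles the resulting four-way case split with several ad~hoc choices of $\pi$ (namely $\pi=c^2$, two specific small-support permutations, and a one-parameter family) to hit $3n-1$, $2n$, $2n+1$, and the intermediate values separately. Your version keeps all $\beta_j=c^j$, frees two \emph{row} permutations, but then immediately pins $\alpha_0=c$ so that only $\tau=\sigma\circ c^{-1}$ remains; this yields the uniform ``good position'' count and a single family of cycles rather than four separate cases. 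The payoff is a tidier bookkeeping and a transparent explanation of why $n>3$ is needed (the wraparound $0\equiv 3\bmod n$ at $j=2$). The paper's version, by contrast, makes the role of the hypothesis $n>3$ less visible, but it stays closer in form to Lemma~\ref{lem:Construction1}.
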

\begin{proof}
For some permutation $\pi\in\mathcal{S}_n$ consider $\alpha,\beta\in\mathcal{S}^n_n$ given by\footnote{Here $\alpha_0 = \alpha_1$ is \emph{not} a typo.}
\begin{align*}
&\alpha_0 = c^1, ~ \alpha_1 = c^1,~ \alpha_2 = c^2,~ \ldots ,~ \alpha_{n-1} = c^{n-1}\\
&\beta_0 = \pi,~ \beta_1 = c^1,~ \beta_2 = c^2,~ \ldots,~ \beta_{n-1} =c^{n-1}.
\end{align*}
With these permutations we have 
\begin{equation}
\lb\alpha_i(j),\beta_j(i)\rb = \begin{cases} \lb 1,\pi(0)\rb, &\text{ if }i=0, j=0 \\
\lb i,\pi(i)\rb, &\text{ if }i\neq 0, j=0\\
\lb j+1\text{ mod } n, j\rb, &\text{ if }i=0, j\neq 0 \\
\lb i+j\text{ mod } n, i+j\text{ mod }n\rb, &\text{ else. } \end{cases}
\label{equ:manyCases}
\end{equation}
Note that the last two cases in \eqref{equ:manyCases} include $2n-1$ different elements of the set 
\[\left\lbrace(\alpha_i(j),\beta_j(i)) \, : \, 0 \leq i,j \leq n-1\right\rbrace .\] 
Therefore, choosing $\pi = c^2$ leads to permutations $\alpha,\beta\in\mathcal{S}^n_n$ with $N(\alpha,\beta)=3n -1$ as the first two cases in \eqref{equ:manyCases} give different elements than the last two cases. Similarly, setting $\pi\in\mathcal{S}_n$ to
\[
\pi(i) =\begin{cases} 1, &\text{ if } i=0\\
0, &\text{ if } i=1\\
i, &\text{ else,}\end{cases} 
\]
in \eqref{equ:manyCases} gives $N(\alpha,\beta) = 2n$. Next, choosing $\pi\in\mathcal{S}_n$ such that
\[
\pi(i) =\begin{cases} 1, &\text{ if } i=0\\
2, &\text{ if } i=1\\
0, &\text{ if } i=2\\
i, &\text{ else,}\end{cases} 
\]
gives $N(\alpha,\beta) = 2n+1$. Finally, for any $0<k < n-2$ consider the permutation $\pi\in\mathcal{S}_n$ given by
\[
\pi(i) =\begin{cases} 1, &\text{ if } i=0\\
k+1, &\text{ if } i=1\\
0, &\text{ if } i=n-1\\
i+1, &\text{ if } i\in\lset k+1,\ldots ,n-2\rset\\
i, &\text{ if } i\in\lset 2,\ldots , k\rset.\end{cases} 
\] 
By inserting this into \eqref{equ:manyCases} it can be verified easily that $N(\alpha,\beta) = 3n-k-1$. 
\end{proof}


With the previous lemmata we can proof Theorem \ref{thm:permutation}.

\begin{proof}[Proof of Theorem \ref{thm:permutation}]
The proof will be done by induction in $n$. Consider the case $n=3$: By Lemma \ref{lem:Construction1} we obtain permutations $\alpha,\beta\in\mathcal{S}^3_3$ for any $r\in\lset 3,5,6\rset$ such that $N(\alpha,\beta) = r$. It is easy to verify that the choice $\alpha_i = \beta_i = \text{id}$ for any $i\in\lset 0,1,2\rset$ gives $N(\alpha,\beta) = 9$. Finally choosing $\alpha_i=\beta_i = \text{id}$ for $i\in\lset 0,1\rset$ and $\alpha_2 = \beta_2 = c^1$ gives $N(\alpha,\beta) = 7$.

Now for $n>3$ assume that for any $r\in\lset n-1,\ldots ,(n-1)^2\rset\setminus \lset n , (n-1)^2-1\rset$ there are $\alpha,\beta\in\mathcal{S}^{n-1}_{n-1}$ such that $N(\alpha,\beta) = r$. Now consider any $r'\in\lset n,\ldots , n^2\rset\setminus \lset n+1,n^2-1\rset$. In the cases where $r'\in\lset n,\ldots , 2n\rset\setminus\lset n+1\rset$ or  $r'\in\lset 2n,\ldots 3n-1\rset$ we can use Lemma \ref{lem:Construction1} and Lemma \ref{lem:Construction2} respectively to construct permutations $\alpha,\beta\in\mathcal{S}^n_n$ with $N(\alpha,\beta) = r'$. For $r'\in\lset 3n,\ldots ,n^2\rset\setminus\lset n^2-1\rset$ we set $r = r'-2n+1$ and note that $r\in\lset  n+1,\ldots ,(n-1)^2 \rset\setminus\lset (n-1)^2-1\rset$. By induction hypothesis there are $\alpha,\beta\in\mathcal{S}^{n-1}_{n-1}$ such that $N(\alpha,\beta) = r$. For any additional permutations $\alpha_{n},\beta_n\in\mathcal{S}_{n-1}$ we define $\alpha',\beta'\in\mathcal{S}^n_{n}$ by 
\[
\alpha_k'(i) = \begin{cases}\alpha_k(i),&\text{ if } i\leq n-1 \\
n,&\text{ if } i=n\end{cases},\hspace*{1cm} \beta_k'(i) = \begin{cases}\beta_k(i),&\text{ if } i\leq n-1 \\
n,&\text{ if } i=n\end{cases}
\]
for $k\leq n$. Now it is easy to verify that 
\[
N(\alpha',\beta') = \Big{|}\left\lbrace(\alpha'_i(j),\beta'_j(i)) \, : \, 0 \leq i,j \leq n\right\rbrace\Big{|} = r + 2n - 1 = r'.
\]
\end{proof}

It is remarkable that the permutation construction detailed in this section shows that all, except two ranks are possible. Note that in the case $n=2$, the remaining two cases, $n+1$ and $n^2-1$, correspond to the restriction $\Omega(U) \neq 3$ from \cite{dvc}. In the next two sections, we construct unitary operators having operator Schmidt rank equal to $n+1$ and $n^2-1$ respectively for any $n\geq 3$.

\section{The product construction}

Here we generalize a construction by Tyson from \cite{tys}. For $n\in\N$ and $k\in \mathbb Z$ define the $k$-cyclic shift $S^n_k :\C^n\ra\C^n$ on the computational basis by
\[
S^n_k e_i  = e_{i+k~\text{mod } n}
\] 
for $i\in\lset 1,\ldots n\rset$.

\begin{theorem}\label{thm:product}
For any $k,l\in\lset 1,\ldots ,\min\lb n,m\rb\rset$ there is a unitary $U\in\mathcal{U}\lb\C^n\otimes \C^m\rb$ such that $\Omega\lb U\rb = kl$.
\end{theorem}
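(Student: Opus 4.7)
The plan is to realise $U$ as a product $U = U_k U_l$ of two bipartite unitaries of operator Schmidt rank $k$ and $l$, designed so that the general upper bound $\Omega(U_k U_l) \leq \Omega(U_k)\Omega(U_l) = kl$ is saturated. This generalises the construction Tyson uses in the $3 \times 3$ case \cite{tys}. The two factors will have a ``dual'' controlled-shift structure: $U_k$ applies the cyclic shift $S^n_i$ to the first tensor factor, controlled by a projection on the second; $U_l$ does the symmetric thing with the roles of the two factors swapped.

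Concretely, I would pick pairwise orthogonal projections $Q_0, \ldots, Q_{k-1}$ on $\mathbb{C}^m$ with $\sum_i Q_i = I_m$ (possible since $k \leq m$) and analogously projections $R_0, \ldots, R_{l-1}$ on $\mathbb{C}^n$ with $\sum_j R_j = I_n$ (possible since $l \leq n$). Define
\[
U_k := \sum_{i=0}^{k-1} S^n_i \otimes Q_i, \qquad U_l := \sum_{j=0}^{l-1} R_j \otimes S^m_j, \qquad U := U_k U_l.
\]
A one-line calculation using $Q_i Q_{i'} = \delta_{ii'} Q_i$ gives $U_k U_k^* = \sum_i I_n \otimes Q_i = I_{nm}$, and similarly for $U_l$, so $U$ is unitary. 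Each of the two displayed decompositions is already a Schmidt decomposition: the $Q_i$ have pairwise disjoint supports and so are Hilbert--Schmidt-orthogonal, while the shifts $S^n_0, \ldots, S^n_{k-1}$ satisfy $\tr((S^n_i)^* S^n_{i'}) = n \delta_{ii'}$ because $k \leq n$. This gives $\Omega(U_k) = k$ and $\Omega(U_l) = l$, hence $\Omega(U) \leq kl$.

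The heart of the argument is the matching lower bound $\Omega(U) \geq kl$, which I would obtain from Proposition \ref{prop:rank-blocks} by inspecting the $m \times m$ blocks of $U$. Expanding the product yields
\[
U = \sum_{i=0}^{k-1} \sum_{j=0}^{l-1} S^n_i R_j \otimes Q_i S^m_j,
\]
and a direct calculation shows that for each $(a,b)$ the block $U_{ab}$ either vanishes or equals $Q_{i^*} S^m_{j(b)}$, where $i^* := (a-b) \bmod n$ (required to lie in $\{0, \ldots, k-1\}$) and $j(b)$ is the unique index with $b \in \operatorname{supp} R_{j(b)}$. As $(i,j)$ ranges over $\{0, \ldots, k-1\} \times \{0, \ldots, l-1\}$, every operator $Q_i S^m_j$ arises as some block: pick any $b \in \operatorname{supp} R_j$ and set $a = (b+i) \bmod n$.

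The remaining and main technical point is linear independence of the $kl$ operators $\{Q_i S^m_j\}$. This follows from a disjoint-support argument: the nonzero entries of $Q_i S^m_j$ sit at positions $\{(r, (r-j) \bmod m) : r \in \operatorname{supp} Q_i\}$, and for distinct pairs $(i,j) \neq (i',j')$ these sets are disjoint -- distinct $i$'s give disjoint row sets (the $Q_i$ have disjoint supports by construction), while coincident $i$ with $j \neq j'$ gives identical rows but distinct column offsets, since $j, j' \in \{0, \ldots, l-1\} \subseteq \{0, \ldots, m-1\}$. Hence the $kl$ blocks are linearly independent, and Proposition \ref{prop:rank-blocks} yields $\Omega(U) = kl$, completing the proof. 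The only step requiring real thought is this final linear-independence check: an arbitrary choice of rank-$k$ and rank-$l$ factors would generally not produce $kl$ independent block products, and the specific shift-and-projection structure is tailored precisely so that this collapse does not occur.
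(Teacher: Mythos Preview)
Your proof is correct and uses essentially the same construction as the paper: $U$ is the product of two controlled-shift unitaries, one shifting the first tensor factor under control of a diagonal projection on the second, and one doing the reverse. The paper's rank verification is slightly more direct---rather than separating an upper bound (submultiplicativity of $\Omega$) from a lower bound via Proposition~\ref{prop:rank-blocks}, it simply observes that both tensor-factor families in the product expansion $U=\sum_{i,j} P_i S^n_j \otimes S^m_i Q_j$ are Hilbert--Schmidt orthogonal (which is precisely your disjoint-support observation, rephrased), so this expansion is already the Schmidt decomposition with $kl$ nonzero terms.
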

\begin{proof}
Consider two partitions of the identity operator into orthogonal, diagonal (in the computational basis) projectors $\lset P_i\rset^k_{i=1}$ and $\lset Q_j\rset^l_{j=1}$ on $\C^n$ and $\C^m$ respectively (i.e.~$\sum^k_{i=1} P_i = \one_n$ and $\sum^l_{j=1} Q_j = \one_m$). Consider the matrices 
\[
V_1 = \sum^k_{i=1} P_i\otimes S^{m}_i~\text{ and }~V_2 = \sum^l_{j=1} S^{n}_j\otimes Q_j.
\]
Using orthogonality of the projectors and unitarity of the cyclic shifts it is easy to verify that $V_1,V_2\in\mathcal{U}\lb\C^{n}\otimes \C^{m}\rb$. This implies that 
\[
U = V_1V_2 = \sum^k_{i=1}\sum^l_{j=1} P_i S^{n}_j\otimes S^{m}_i Q_j
\] 
is a unitary matrix. Note that $\lset P_i S^{n}_j\rset_{i,j}$ and $\lset S^{m}_i Q_j\rset_{i,j}$ are orthogonal sets with respect to the Hilbert-Schmidt inner product: 
$$\langle P_i S^{n}_j,  P_{i'} S^{n}_{j'} \rangle = \operatorname{tr}\left[ S^{n}_{-j}P_{i} P_{i'} S^{n}_{j'}\right] = \delta_{i,i'}\operatorname{tr}\left[ P_{i}  S^{n}_{j'-j}\right] = \delta_{i,i'}\delta_{j,j'},$$
where for the last equality we have used the fact that the diagonal of the operator $S^{n}_{j'-j}$ is zero, unless $j=j'~\text{mod } n$. This shows that $\Omega\lb U\rb = kl$.
\end{proof}

\begin{corollary}\label{cor:product-odd}
For any odd $n\geq 3$, there exist a unitary operator $U \in \mathcal U(\mathbb C^n \otimes \mathbb C^n)$ with operator Schmidt rank $\Omega(U) = n+1$. 
\end{corollary}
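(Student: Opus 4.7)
The plan is to deduce this corollary directly from Theorem \ref{thm:product}, which guarantees the existence of a unitary on $\C^n \otimes \C^m$ with operator Schmidt rank $kl$ for any $k,l \in \{1,\ldots,\min(n,m)\}$. Since we are in the balanced case $n=m$, it suffices to factor $n+1$ as a product of two integers, each lying in $\{1,\ldots,n\}$.

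The key observation is that if $n$ is odd, then $n+1$ is even, so we can write
\[
n+1 = 2 \cdot \frac{n+1}{2}.
\]
I would check the two bounds: clearly $2 \leq n$ since $n \geq 3$, and $\frac{n+1}{2} \leq n$ is equivalent to $n \geq 1$, which holds. Thus setting $k = 2$ and $l = \frac{n+1}{2}$, both factors lie in the required range $\{1, \ldots, n\}$.

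Applying Theorem \ref{thm:product} with these values of $k$ and $l$ then produces a unitary $U \in \mathcal{U}(\C^n \otimes \C^n)$ with $\Omega(U) = kl = n+1$, as desired. There is essentially no obstacle here; the only thing that makes this work specifically for odd $n$ (and not all $n$) is the requirement that $n+1$ factors nontrivially with both factors at most $n$, which for small even $n$ (notably $n=2$) forces the trivial factorization $n+1 = 1 \cdot (n+1)$ with $n+1 > n$. For even $n \geq 4$, the case $n+1$ prime would obstruct this approach entirely, which is why a separate argument (Proposition \ref{prop:product-even}) is needed in the even case.
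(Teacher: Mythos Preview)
Your proof is correct and is exactly the intended argument: the paper states this corollary without proof, as an immediate consequence of Theorem \ref{thm:product} via the factorization $n+1 = 2 \cdot \frac{n+1}{2}$ for odd $n$. Your added remarks on why the even case needs a separate treatment are also accurate.
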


The next result contains the other half of the existence proof for the operator Schmidt rank $n+1$. 

\begin{proposition}\label{prop:product-even}
For any even $n > 2$ there exists a unitary $U\in\mathcal{U}\lb\C^n\otimes \C^n\rb$ with operator Schmidt rank $\Omega\lb U\rb = n+1$.
\end{proposition}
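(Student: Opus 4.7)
The plan is to exhibit an explicit unitary. Theorem~\ref{thm:product} achieves only ranks of the form $kl$ with $k,l\leq n$, so for even $n$ with $n+1$ prime (such as $n=4,6,10,12,\ldots$) the product construction alone cannot produce rank $n+1$. Instead, I would assemble $U$ as a block-diagonal matrix whose top-left $2n\times 2n$ block mixes $I$, $X$ and $X^{-1}$ into the block span, while the remaining $n-2$ diagonal blocks contribute $n-2$ further Weyl operators disjoint from $\{I,X,X^{-1}\}$, for a total of $n+1$ linearly independent blocks.

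Concretely, with $X$ the cyclic shift $Xe_i=e_{i+1\bmod n}$ and $Z$ the phase $Ze_i=\omega^i e_i$, $\omega=e^{2\pi i/n}$, I would define the blocks of $U$ by
\[
U_{11}=\tfrac{1}{\sqrt{2}}\,I,\quad U_{22}=-\tfrac{1}{\sqrt{2}}\,I,\quad U_{12}=\tfrac{1}{\sqrt{2}}\,X^{-1},\quad U_{21}=\tfrac{1}{\sqrt{2}}\,X,
\]
together with $U_{ii}=Z^{i-2}$ for $i=3,\ldots,n$ and $U_{ij}=0$ otherwise. Then $U$ is block-diagonal with a top-left $2n\times 2n$ block $M=\tfrac{1}{\sqrt{2}}\begin{pmatrix}I & X^{-1}\\ X & -I\end{pmatrix}$ and further $n\times n$ diagonal blocks $Z, Z^2, \ldots, Z^{n-2}$.

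Two short verifications then complete the argument. Unitarity reduces to showing that $M$ is unitary, which follows from $M^{*}=M$ together with a direct computation $M^{2}=I$ using $XX^{-1}=I$. For the Schmidt rank, Proposition~\ref{prop:rank-blocks} identifies $\Omega(U)$ with the dimension of the span of the non-zero blocks, which up to nonzero scalars is $\{I, X, X^{-1}, Z, Z^2, \ldots, Z^{n-2}\}$; these are $n+1$ distinct Weyl operators and hence orthogonal (thus linearly independent) under the Hilbert--Schmidt inner product, so $\Omega(U)=n+1$. I do not foresee a real obstacle; the only substantive design choice is the mixer $M$, which must simultaneously be unitary and carry the three independent operators $I$, $X$, $X^{-1}$ in its four entries. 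In fact the construction works for every $n\geq 3$ and is insensitive to the parity of $n$, so only the even case stated in the proposition is actually needed.
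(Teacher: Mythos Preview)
Your proof is correct, and it proceeds along a genuinely different route from the paper's. The paper writes $n=2k$, invokes Theorem~\ref{thm:product} to produce a unitary $\tilde U\in\mathcal U(\C^k\otimes\C^k)$ with $\Omega(\tilde U)=2k$, and then embeds the $k\times k$ blocks $A_{ij}$ of $\tilde U$ into $\C^n\otimes\C^n$ as $\begin{pmatrix}A_{ij}&0\\0&A_{ij}\end{pmatrix}$, padding the remaining diagonal with $\begin{pmatrix}V&0\\0&W\end{pmatrix}$ for $V\neq W$; the extra block is automatically independent because every matrix in the span of the doubled $A_{ij}$ has equal diagonal sub-blocks. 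Your construction, by contrast, is entirely self-contained: you do not need Theorem~\ref{thm:product} at all, the $2\times 2$ ``mixer'' $M=\tfrac{1}{\sqrt 2}\begin{pmatrix}I&X^{-1}\\X&-I\end{pmatrix}$ directly places three independent Weyl operators into the block span while remaining unitary (Hermitian with $M^2=I$), and the diagonal tail $Z,\ldots,Z^{n-2}$ supplies the remaining $n-2$ dimensions via Hilbert--Schmidt orthogonality of distinct Weyl operators. Your argument is more explicit, shorter, and---as you note---works uniformly for every $n\geq 3$, so it in fact also covers Corollary~\ref{cor:product-odd}. The paper's approach, on the other hand, illustrates a general embedding trick (lifting a smaller-dimensional example and tweaking one block) that it reuses elsewhere, e.g.\ in Theorem~\ref{thm:unbalanced}.
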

\begin{proof}
Let $n=2k$ for $k\in\N$. By Theorem \ref{thm:product} there exists a unitary $\tilde{U}\in\mathcal{U}\lb\C^k\otimes\C^k\rb$ with operator Schmidt rank $\Omega\lb \tilde{U}\rb = n = 2k \leq k^2$. We write 
\[
\tilde{U} = \begin{pmatrix}
  A_{11} & \cdots & A_{1k} \\
  \vdots  & \ddots & \vdots  \\
  A_{k1} & \cdots & A_{kk} 
 \end{pmatrix}
\]
and note that by Proposition \ref{prop:diag-blocks} we have $n = \Omega(\tilde{U}) = \text{dim}\lb\text{span}\lset A_{ij}\rset^k_{i,j=1}\rb$. Now consider the unitary $U\in\mathcal{U}\lb\C^n\otimes \C^n\rb$ given by
\[
U = \begin{pmatrix}
  \begin{array}{c c}A_{11} & 0 \\ 0 & A_{11}\end{array} & \cdots & \begin{array}{c c}A_{1k} & 0 \\ 0 & A_{1k}\end{array} & \begin{array}{c c} 0 & 0 \\ 0 & 0\end{array} & \cdots &\begin{array}{c c} 0 & 0 \\ 0 & 0\end{array} \\
  \vdots & \ddots & \vdots & \vdots & &  \vdots \\
   \begin{array}{c c}A_{k1} & 0 \\ 0 & A_{k1}\end{array} & \cdots & \begin{array}{c c}A_{kk} & 0 \\ 0 & A_{kk}\end{array} & \begin{array}{c c} 0 & 0 \\ 0 & 0\end{array} & \cdots &\begin{array}{c c} 0 & 0 \\ 0 & 0\end{array} \\
   \begin{array}{c c}0 & 0 \\ 0 & 0\end{array} & \cdots & \begin{array}{c c}0 & 0 \\ 0 & 0\end{array} & \begin{array}{c c} V & 0 \\ 0 & W\end{array} & \cdots &\begin{array}{c c} 0 & 0 \\ 0 & 0\end{array} \\
   \vdots &  & \vdots & \vdots & \ddots &\vdots\\
   \begin{array}{c c}0 & 0 \\ 0 & 0\end{array} & \cdots & \begin{array}{c c}0 & 0 \\ 0 & 0\end{array} & \begin{array}{c c} 0 & 0 \\ 0 & 0\end{array} & \cdots &\begin{array}{c c} V & 0 \\ 0 & W\end{array}
 \end{pmatrix}
\]
for unitaries $V\neq W$, each appearing $k$ times on the diagonal of $U$. Again by Proposition \ref{prop:rank-blocks} and the choice of the $A_{ij}$ we have 
\[\Omega(U) = \text{dim}\lb\text{span}\lb\left\lbrace\begin{pmatrix} A_{ij} & 0 \\ 0 & A_{ij}\end{pmatrix} \right\rbrace^k_{i,j=1}\cup \left\lbrace \begin{pmatrix} V & 0 \\ 0 & W\end{pmatrix} \right\rbrace\rb\rb = n+1 . \] 
\end{proof}

\section{The Fourier construction}

To construct unitaries $U\in\mathcal{U}\lb \C^{n}\otimes \C^n\rb$ with $\Omega\lb U\rb = n^2 -1$ we will apply the Fourier method developed by Tyson in \cite{tys}. For $\lambda\in \mathcal M_n(\mathbb C)$ let
\[
\hat{\lambda}(a,b) = \frac{1}{n}\sum^{n-1}_{\alpha = 0}\sum^{n-1}_{\beta = 0} \exp\lb \frac{2\pi i}{n}(a\alpha + b\beta)\rb\lambda(\alpha,\beta).
\] 
denote the discrete Fourier transform. For the construction we define an orthonormal basis of $\C^n\otimes \C^n$ by 
\[
v_{\alpha,\beta} := n^{-1/2}\sum^{n-1}_{j=0} \phi^{\alpha(j-\beta)}_{n} e_{j-\beta \text{ mod }n}\otimes e_j
\]
for $\phi_n = \exp(\frac{2\pi i}{n})$. We will apply the following result of Tyson:
\begin{theorem}[{\cite[Theorem 7]{tys}}]
For $\lambda\in\mathcal M_n(\mathbb C)$ with $|\lambda(i,j)| = 1$ the unitary operator 
\[
D_\lambda = \sum^{n-1}_{\alpha=0}\sum^{n-1}_{\beta=0}\lambda\lb\alpha,\beta\rb v_{\alpha,\beta}v_{\alpha,\beta}^*~\in~\mathcal{U}\lb\C^n\otimes \C^n\rb
\]
has operator Schmidt rank
\[
\Omega\lb D_\lambda\rb = \left| \lset (a,b)\in\lset 0,\ldots ,n-1\rset^2 \, : \, \hat{\lambda}(a,b)\neq 0 \rset  \right|.
\]
\label{thm:Fourier}
\end{theorem}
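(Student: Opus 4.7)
\medskip

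\noindent \emph{Proof proposal.} The plan is to diagonalize $D_\lambda$ in the orthonormal basis $\lset v_{\alpha,\beta}\rset$, write the scalars $\lambda(\alpha,\beta)$ via the inverse discrete Fourier transform, and recognize each Fourier mode as a tensor product of generalized Weyl operators that automatically furnishes an operator Schmidt decomposition of $D_\lambda$.

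First I would verify that $\lset v_{\alpha,\beta}\rset^{n-1}_{\alpha,\beta=0}$ is an orthonormal basis of $\C^n \otimes \C^n$: a direct computation of $\lk v_{\alpha,\beta}, v_{\alpha',\beta'}\rk$ forces $j=j'$ and then $\beta = \beta'$ from the standard-basis inner products, after which the residual sum $\frac{1}{n}\sum_j \phi_n^{(\alpha'-\alpha)j}$ is a root-of-unity sum that vanishes unless $\alpha = \alpha'$. Since there are exactly $n^2$ such vectors, they form an orthonormal basis. Consequently, $D_\lambda$ is diagonal in this basis with eigenvalues $\lambda(\alpha,\beta)$ of modulus one, which gives unitarity for free.

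Next I would invert the Fourier transform as
\[
\lambda(\alpha,\beta) = \frac{1}{n}\sum_{a,b=0}^{n-1} \phi_n^{-(a\alpha+b\beta)}\, \hat\lambda(a,b)
\]
and substitute into the definition of $D_\lambda$ to get
\[
D_\lambda = \frac{1}{n}\sum_{a,b=0}^{n-1} \hat\lambda(a,b)\, T_{a,b}, \qquad T_{a,b} := \sum_{\alpha,\beta=0}^{n-1} \phi_n^{-(a\alpha+b\beta)}\, v_{\alpha,\beta} v_{\alpha,\beta}^*.
\]
The crucial step is to compute the computational-basis matrix elements of $T_{a,b}$. Using the formula for $v_{\alpha,\beta}$, the $\beta$-sum forces $k - l \equiv a \pmod n$ while the $\alpha$-sum forces $i - m \equiv a \pmod n$; the surviving phase is $\phi_n^{-b(k-i)} = \phi_n^{bi}\cdot\phi_n^{-bk}$, and this decouples the two tensor factors. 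I therefore expect to identify
\[
T_{a,b} = W_{a,b} \otimes W_{a,-b}, \qquad W_{a,b} := \sum_{i=0}^{n-1} \phi_n^{bi}\, e_i e_{i-a \bmod n}^*.
\]

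Finally, the $W_{a,b}$ are generalized Weyl (shift-and-phase) operators, and a short Hilbert--Schmidt calculation gives $\lk W_{a,b}, W_{a',b'}\rk = n\,\delta_{(a,b),(a',b')}$. In particular the $n^2$ left factors $\lset W_{a,b}\rset$ and the $n^2$ right factors $\lset W_{a,-b}\rset$ are each orthogonal families, so
\[
D_\lambda = \frac{1}{n}\sum_{a,b=0}^{n-1} \hat\lambda(a,b)\, W_{a,b} \otimes W_{a,-b}
\]
is, up to normalization, an operator Schmidt decomposition of $D_\lambda$. Its rank therefore equals the number of nonzero Fourier coefficients $\hat\lambda(a,b)$, as claimed. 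The main obstacle I anticipate is the bookkeeping in the matrix-element computation of $T_{a,b}$: one must track the modular shifts $j - \beta \bmod n$ carefully and verify that the coupled double sum over $(\alpha,\beta)$ genuinely decouples via two independent root-of-unity identities. Once the Weyl-operator identification is in hand, the orthogonality argument is routine.
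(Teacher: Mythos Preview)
The paper does not supply its own proof of this statement: it is quoted verbatim as \cite[Theorem 7]{tys} and used as a black box in the proof of Proposition~\ref{prop:existence-fourier}. So there is no ``paper's proof'' to compare against.

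That said, your argument is correct and is essentially the standard Fourier/Weyl computation one would expect Tyson's original proof to contain. The verification that $\{v_{\alpha,\beta}\}$ is an orthonormal basis is routine, and the key identity $T_{a,b}=W_{a,b}\otimes W_{a,-b}$ does hold: after summing over $\alpha$ (which enforces the shift constraint on the second tensor factor) and then reindexing the $\beta$-sum, the remaining phase $\phi_n^{-b\beta}$ factors as $\phi_n^{bm}\phi_n^{-bj}$, giving the claimed tensor product of shift-and-phase operators. Your description of which sum forces which constraint is slightly garbled (it is the $\alpha$-sum, not the $\beta$-sum, that produces the first $\delta$-constraint; the $\beta$-sum is then handled by a change of variables rather than a second root-of-unity identity), but the outcome is exactly as you state. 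The Hilbert--Schmidt orthogonality of the Weyl operators is standard, and since $(a,b)\mapsto(a,-b\bmod n)$ is a bijection of $\{0,\ldots,n-1\}^2$, the right-hand factors are pairwise orthogonal as well. Hence the displayed expansion is an operator Schmidt decomposition, and its rank is precisely the number of nonzero $\hat\lambda(a,b)$.
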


With the above we have the following result, showing the existence of a unitary with operator Schmidt rank equal to $n^2-1$:
\begin{proposition}\label{prop:existence-fourier}
For any $n>2$ there exist $\lambda\in\C^{n\times n}$ with $|\lambda_{ij}| = 1$ such that the unitary operator $D_\lambda\in\mathcal{U}\lb\C^n\otimes \C^n\rb$ defined in Theorem \ref{thm:Fourier} fulfills $\Omega\lb D_\lambda\rb = n^2 -1$. 
\end{proposition}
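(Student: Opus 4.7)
My plan is to apply Tyson's Theorem \ref{thm:Fourier}: for $n \geq 3$, it suffices to exhibit a unimodular matrix $\lambda \in \mathcal{M}_n(\C)$ (i.e.\ $|\lambda(\alpha,\beta)| = 1$ for all entries) whose discrete Fourier transform $\hat\lambda$ has exactly one vanishing coefficient. Since the entrywise multiplication $\lambda(\alpha,\beta) \mapsto \phi_n^{c\alpha + d\beta}\lambda(\alpha,\beta)$ simply translates the zero set of $\hat\lambda$ by $(c,d)$, I may assume without loss of generality that the single zero lies at $(0,0)$. Equivalently, I seek $\lambda$ with $\sum_{\alpha,\beta}\lambda(\alpha,\beta) = 0$ while every other character-weighted sum is nonzero.

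The natural base point is the chirp $\lambda_0(\alpha,\beta) := \phi_n^{\alpha\beta}$, whose Fourier transform satisfies $\hat\lambda_0(a,b) = \phi_n^{-ab}$ (a short calculation in the spirit of the proof of Theorem \ref{thm:product}); in particular $|\hat\lambda_0(a,b)| = 1$ for all $(a,b)$. Writing a general unimodular $\lambda$ as $\lambda(\alpha,\beta) = c(\alpha,\beta)\phi_n^{\alpha\beta}$ with $|c(\alpha,\beta)| = 1$, and using the identity $(a+\beta)(b+\alpha) = ab + a\alpha + b\beta + \alpha\beta$, the Fourier transform takes the form
\[
\hat\lambda(a,b) \;=\; \phi_n^{-ab}\left(1 + \frac{1}{n}\sum_{\alpha,\beta}(c(\alpha,\beta) - 1)\,\phi_n^{(a+\beta)(b+\alpha)}\right).
\]
Thus $\hat\lambda(0,0) = 0$ reduces to the single complex equation $\sum_{\alpha,\beta} c(\alpha,\beta)\phi_n^{\alpha\beta} = 0$, cutting out a compact real-analytic submanifold $V \subset (S^1)^{n^2}$ of real dimension $n^2 - 2$. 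The set $V$ is nonempty for all $n \geq 2$: since the substitution $w(\alpha,\beta) = c(\alpha,\beta)\phi_n^{\alpha\beta}$ is a bijection of the unimodular torus, the condition becomes simply $\sum w = 0$, which is satisfied by any balanced placement of roots of unity on $n^2 \geq 4$ entries.

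For each $(a,b) \neq (0,0)$, the additional vanishing $\hat\lambda(a,b) = 0$ defines a real-analytic subset $V_{(a,b)} \subset V$. The goal is to show $V \setminus \bigcup_{(a,b) \neq (0,0)} V_{(a,b)} \neq \emptyset$; by a standard Baire/measure-zero argument, it suffices to establish that each $V_{(a,b)}$ is a \emph{proper} subset of $V$. This properness step is the main obstacle, and is exactly what fails for $n = 2$: four unimodular numbers summing to zero form a parallelogram, whose rigid structure automatically kills a second Fourier coefficient and produces the Schmidt-rank-$3$ obstruction of \cite{dvc}. For $n \geq 3$ the dimension $n^2 - 2 \geq 7$ of $V$ provides enough freedom to break this degeneracy; I would verify $V_{(a,b)} \subsetneq V$ either by a differential argument --- the two $\C$-valued $\R$-linear functionals $\lambda \mapsto \hat\lambda(0,0)$ and $\lambda \mapsto \hat\lambda(a,b)$ are generically independent on the tangent space of $(S^1)^{n^2}$ when $(a,b) \neq (0,0)$ --- or by exhibiting an explicit $\lambda \in V$ with $\hat\lambda(a,b) \neq 0$, for instance by perturbing $\lambda_0$ only on a single column with unimodular factors summing appropriately and then checking through a short direct calculation that the chosen target coefficient does not accidentally vanish. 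Combining properness with Baire category then produces the desired $\lambda$ and completes the proof.
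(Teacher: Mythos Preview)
Your approach is genuinely different from the paper's and the overall architecture is sound: reduce to Tyson's criterion, force $\hat\lambda(0,0)=0$, and argue that the remaining $n^2-1$ vanishing loci are thin. However, the proof as written has a real gap at exactly the point you flag as ``the main obstacle'': you assert that each $V_{(a,b)}\subsetneq V$ for $(a,b)\neq(0,0)$ but do not prove it. Both suggested verifications are left at the level of intent. The differential argument, as stated, is close to circular: saying the two $\C$-valued functionals are ``generically independent on the tangent space of $(S^1)^{n^2}$'' is not enough, since what you need is independence \emph{at a point of $V$}, i.e.\ after already restricting to the kernel of $d\hat\lambda(0,0)$; establishing that is equivalent to producing a point of $V\setminus V_{(a,b)}$, which is the second option you also leave undone. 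You would further need $V$ (or its smooth locus) to be connected so that a proper analytic subset is automatically nowhere dense---this is true for $n\ge 3$ but should be said. Since the $n=2$ failure shows the properness is not formal, a complete proof must contain an actual construction or computation here.

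The paper sidesteps the soft argument entirely. Rather than working on the full torus $(S^1)^{n^2}$, it builds a concrete one-real-parameter family $\lambda_x$ in which $\hat\lambda_x(0,0)=0$ holds identically, while each $\hat\lambda_x(a,b)$ for $(a,b)\neq(0,0)$ becomes an explicit polynomial $P_{a,b}$ evaluated at $\phi_n^{\,x}$. The nontrivial step---showing $P_{a,b}\not\equiv 0$---is handled by a small algebraic device: a prime $p<n$ with $p\nmid n$ is wired into the definition of $\lambda_x$ so that at least one monomial coefficient $c_{p,1}$ or $c_{1,0}$ is forced to be nonzero whenever $(a,b)\neq(0,0)$. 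Then one simply avoids the finitely many roots of the finitely many $P_{a,b}$. Your Baire strategy is more conceptual and would, if completed, generalize more easily; the paper's construction is more explicit and makes the $n\ge 3$ versus $n=2$ distinction visible through the existence of such a prime $p$.
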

\begin{proof}
Let $p\in \lset 1,\ldots ,n-1\rset$ be a prime number\footnote{If there were no such prime for some $n>3$, then the primes $2$ and $3$ have to divide $n$. Now consider a prime $p\in\lbr \frac{n}{2},n\rbr$ (such a prime exists by Bertrand's postulate). By assumption $p$ divides $n$  as well and we have $1\leq \frac{n}{p}\leq 2$ by construction. But then $3$ cannot divide $\frac{n}{p}$ contradicting that $3$ divides $n$.} not dividing $n$. For $i<j$ fix distinct numbers $M_{i,j}\in\N$. Now for some $x\in\R$ to be specified later choose 
\[
\lambda_x\lb \alpha,\beta\rb = \begin{cases} \phi^\alpha_n, &\text{ if } \alpha = \beta \\
(\phi^x_n)^{M_{\alpha,\beta}}, &\text{ if } \alpha>\beta \\
-(\phi^x_n)^{M_{1,0}}, &\text{ if } (\alpha,\beta) = (1,p)\\
-(\phi^x_n)^{M_{p,1}}, &\text{ if } (\alpha,\beta) = (0,1) \\
-(\phi^x_n)^{M_{\beta,\alpha}} &\text{ else.}\end{cases}
\]
We will show that $x\in\R$ can be chosen such that
\[
\left| \lset (a,b)\in\lset 0,\ldots ,n-1\rset^2 \, : \, \hat{\lambda}(a,b)\neq 0 \rset  \right| = n^2-1.
\]
Then, by Theorem \ref{thm:Fourier} the unitary $D_{\lambda_x}$ will have operator Schmidt rank $\Omega\lb D_\lambda\rb = n^2 -1$ finishing the proof. By construction we have 
\[
\hat{\lambda}_x(0,0) = \frac{1}{n}\sum^{n-1}_{\alpha = 0}\sum^{n-1}_{\beta = 0}\lambda_x(\alpha,\beta) = \sum^{n-1}_{\alpha=0}\phi^\alpha_n = 0
\]
for any $x\in\R$. Thus, we have to find $x\in\R$ with $\hat{\lambda}_x(a,b)\neq 0$ for any $(a,b)\in\lset 0,\ldots ,n-1\rset^2\setminus \lset (0,0)\rset$. For each $(a,b)\neq (0,0)$ we define a polynomial $P_{a,b}$ by 
\[
P_{a,b}(z) = \frac{1}{n}\lbr \sum^{n-1}_{\alpha = 0} \phi^{(a+b+1)\alpha}_n + \sum_{\alpha>\beta} c_{\alpha,\beta} z^{M_{\alpha,\beta}} \rbr
\]
with coefficients
\[
c_{\alpha,\beta} = \begin{cases} \exp\lb\frac{2\pi i}{n}(a\alpha + b\beta)\rb - \exp\lb\frac{2\pi i}{n} (a\beta + b\alpha)\rb, &\text{ if } (p,1)\neq (\alpha,\beta) \neq (1,0) \\
\exp\lb\frac{2\pi i}{n}(ap + b)\rb - \exp\lb\frac{2\pi i}{n} b\rb, &\text{ if } (\alpha,\beta) = (p,1)\\
\exp\lb\frac{2\pi i}{n}a\rb - \exp\lb\frac{2\pi i}{n}(a+bp)\rb, &\text{ if } (\alpha,\beta) = (1,0)\end{cases}.
\]
Note that $c_{p,1} = c_{1,0} = 0$ iff $n$ divides both $ap$ and $bp$. This can only happen for $(a,b) = (0,0)$ as $a,b<n$ and $p$ is not a divisor of $n$. Therefore, we have $P_{a,b}\neq 0$ (as polynomials) for any $(a,b)\neq (0,0)$. Finally note that 
\[
\hat{\lambda}_x(a,b) = P_{a,b}\lb\phi^{x}_n \rb.
\]
As any $P_{a,b}$ has only finitely many zeros it is possible to choose $x\in\R$ such that $\hat{\lambda}_x(a,b)\neq 0$ for all $(a,b)\neq 0$.

\end{proof}

Note that the unitary matrices constructed in this section have complex entries, whereas our other constructions are all real. This raises the question of constructing bipartite orthogonal operators $U\in\mathcal{O}\lb \C^{n}\otimes \C^{n}\rb$ having operator Schmidt rank $n^2-1$. We leave this question open. Such a construction would prove a real (i.e.~orthogonal) version of Theorem \ref{thm:main}.

\section{Unbalanced dimensions}

In the previous sections, we have considered operators $U \in \mathcal U(\mathbb C^n \otimes \mathbb C^n)$ and we have shown (Proposition \ref{prop:diag-blocks}, Theorem \ref{thm:permutation}, Corollary \ref{cor:product-odd}, Proposition \ref{prop:product-even}, Proposition \ref{prop:existence-fourier}) that, for $n \geq 3$, any rank $r \in \{1, \ldots, n^2\}$ is achievable. To finish the proof of Theorem \ref{thm:main}, we show in this section how to extend the previous constructions to the general case $n \neq m$.

\begin{theorem}\label{thm:unbalanced}
For any $(n,m)\neq (2,2)$ and $r\in\lset 1,\ldots ,\min\lb n,m\rb^2\rset$ there exists a unitary $U\in\mathcal{U}\lb\C^{n}\otimes \C^m\rb$ with operator Schmidt rank $\Omega\lb U\rb = r$.

\end{theorem}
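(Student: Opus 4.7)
My plan is to reduce to the balanced case via a block-diagonal direct-sum construction. Assume without loss of generality that $n < m$, since the operator Schmidt rank is invariant under swapping the tensor factors and the case $n = m$ has been handled in the previous sections. The idea is to build $U$ as
\[
U = \tilde U \oplus I_{n(m-n)} \in \mathcal{U}(\mathbb{C}^n \otimes \mathbb{C}^m)
\]
via the canonical decomposition $\mathbb{C}^n \otimes \mathbb{C}^m = (\mathbb{C}^n \otimes \mathbb{C}^n) \oplus (\mathbb{C}^n \otimes \mathbb{C}^{m-n})$, and to pick $\tilde U \in \mathcal{U}(\mathbb{C}^n \otimes \mathbb{C}^n)$ so that the Schmidt rank of $U$ hits the prescribed value.

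The central observation that I would prove as a lemma is the following formula: if $\tilde U = \sum_{\alpha=1}^{s} A_\alpha \otimes B_\alpha$ is the Schmidt decomposition of $\tilde U$ (so $s = \Omega(\tilde U)$), then
\[
\Omega(U) = \begin{cases} s & \text{if } I_n \in \operatorname{span}\{A_1, \ldots, A_s\}, \\ s+1 & \text{otherwise.} \end{cases}
\]
This follows from Proposition \ref{prop:rank-blocks} applied to the blocks $U_{ij} = \tilde U_{ij} \oplus \delta_{ij} I_{m-n}$: whether $\operatorname{span}\{U_{ij}\}$ gains an extra dimension beyond $\operatorname{span}\{\tilde U_{ij}\}$ is controlled by whether the linear functional $c \mapsto \sum_i c_{ii}$ on $\mathcal{M}_n(\mathbb{C})$ descends through the block map $c \mapsto \sum_{i,j} c_{ij}\tilde U_{ij}$, and unwinding the Schmidt decomposition shows that this descent is equivalent to $I_n$ lying in the first-factor span $\operatorname{span}\{A_\alpha\}$.

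Granted the lemma, the construction proceeds as follows. For $r = 1$, take $\tilde U = I_n \otimes I_n$, whose first-factor span trivially contains $I_n$. For $r \in \{2, \ldots, n^2\}$ outside the exceptional case $(n,r) = (2,4)$, I would start from some $\tilde U_0 \in \mathcal{U}(\mathbb{C}^n \otimes \mathbb{C}^n)$ of Schmidt rank $r-1$, which exists either by the balanced case of Theorem \ref{thm:main} when $n \geq 3$, or by Proposition \ref{prop:diag-blocks} when $n = 2$ and $r - 1 \in \{1,2\}$. I would then modify it to $\tilde U := \tilde U_0 (W^{-1} \otimes I_n)$, where $W \in \mathcal{U}_n$ is any unitary lying outside the first-factor span $\operatorname{span}\{A_\alpha\}$ of $\tilde U_0$. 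Such a $W$ exists because that span is a proper subspace of $\mathcal{M}_n(\mathbb{C})$ (of dimension $r-1 < n^2$), while $\mathcal{U}_n$ is not contained in any proper subspace---the Weyl operators, for instance, form a unitary basis. Right-multiplication by $W^{-1} \otimes I_n$ preserves unitarity and Schmidt rank, and the choice of $W$ guarantees $I_n \notin \operatorname{span}\{A_\alpha W^{-1}\}$, so the lemma gives $\Omega(U) = (r-1) + 1 = r$. For the exceptional case $n = 2$, $r = 4$, any Schmidt-rank-$4$ unitary $\tilde U \in \mathcal{U}(\mathbb{C}^2 \otimes \mathbb{C}^2)$ has first-factor span equal to $\mathcal{M}_2(\mathbb{C}) \ni I_2$, so the lemma directly yields $\Omega(U) = 4$.

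The main technical difficulty is the lemma itself: one has to carefully unpack Proposition \ref{prop:rank-blocks} together with the Schmidt decomposition of $\tilde U$ to extract the clean criterion in terms of $I_n$. Once that is in place, the remaining work---exhibiting the modifying unitary $W$---is a short dimension count combined with the fact that $\mathcal{U}_n$ spans $\mathcal{M}_n(\mathbb{C})$.
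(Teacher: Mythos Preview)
Your proposal is correct and follows essentially the same direct-sum strategy as the paper: embed a balanced unitary of operator Schmidt rank $r-1$ block-diagonally and arrange for the remaining diagonal block to contribute exactly one new direction to the block span. The only cosmetic difference is that the paper places a unitary $V \notin \operatorname{span}\{A_{ij}\}$ on the extra diagonal (and cites Theorem~\ref{thm:product} directly for the corner case $(n,r)=(2,4)$), whereas you fix that block to be the identity and instead twist $\tilde U$ by a local unitary $W^{-1}\otimes I_n$ so that $I_n$ falls outside the first-factor Schmidt span---two equivalent ways of forcing the same $+1$ increment.
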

\begin{proof}
Without loss of generality we can assume $n<m$. As $\Omega(V\otimes W) = 1$ for $V\in\mathcal{U}\lb\C^n\rb$ and $W\in\mathcal{U}\lb\C^m\rb$ we will assume in the following that $r>1$. 
We begin with the case of $n>2$: By the previous results dealing with the $n=m$ case, there exists a unitary $\tilde{U}\in\mathcal{U}\lb \C^{n}\otimes \C^n\rb$ with operator Schmidt rank $\Omega\lb \tilde{U}\rb = r-1$. We write 
\begin{equation}
\tilde{U} = \begin{pmatrix}
  A_{11} & \cdots & A_{1n} \\
  \vdots  & \ddots & \vdots  \\
  A_{n1} & \cdots & A_{nn} 
 \end{pmatrix}
\label{equ:blockform}
\end{equation}
and note that by Proposition \ref{prop:rank-blocks} we have $r-1 = \Omega(\tilde{U}) = \text{dim}\lb\text{span}\lset A_{ij}\rset^n_{i,j=1}\rb$. As $r-1\leq n^2 -1$ there exists a unitary $V\in \mathcal{U}\lb\C^n\rb$ with $V\notin \text{span}\lset A_{ij}\rset^n_{i,j=1}$. Now consider $U\in\mathcal{U}\lb\C^n\otimes \C^m\rb$ given by
\[
U = \begin{pmatrix}
  A_{11} & \cdots & A_{1n} & 0 & \cdots & 0 \\
  \vdots  & \ddots & \vdots & \vdots & & \vdots \\
  A_{n1} & \cdots & A_{nn} & 0 & \cdots & 0 \\
  0 & \cdots & 0 & V & \cdots & 0 \\
  \vdots  & \ddots & \vdots & \vdots & \ddots & \vdots \\ 
  0 & \cdots & 0 & 0 & \cdots & V.
 \end{pmatrix}
\]  
By Proposition \ref{prop:rank-blocks} and the choice of $V$ we have $\Omega(U) = \text{dim}\lb\text{span}\lb\lset A_{ij}\rset^n_{i,j=1}\cup \lset V\rset\rb\rb = r$.

In the case $n=2$ we can use the same construction as above to construct unitaries $U\in\mathcal{U}\lb\C^2\otimes\C^m\rb$ (for $m>2$) with operator Schmidt ranks $\Omega\lb U\rb\in \lset 1,2,3\rset$. The existence of a unitary $U\in\mathcal{U}\lb\C^2\otimes\C^m\rb$ with $\Omega\lb U\rb = 4$ follows from Theorem \ref{thm:product}.


\end{proof}


\begin{thebibliography}{99}

\bibitem{bzy} 
Bengtsson, I. and \.Zyczkowski, K.
{\it Geometry of quantum states.}
Cambridge Univ. Press (2006).

\bibitem{bne}
Benoist, T., Nechita, I.
{\it On bipartite unitary matrices generating subalgebra--preserving quantum operations.}
Preprint arXiv:1608.05811.

\bibitem{czu}
Colbourn, C. J., Zhu, L.
{\it The spectrum of r-Orthogonal Latin squares.}
Combinatorics Advances, pp. 49--75 (1995).

\bibitem{dvc}
D\"ur, W., Vidal, G., Cirac, J. I.
{\it Optimal conversion of nonlocal unitary operations.}
Physical review letters, 89(5), 057901 (2002).

\bibitem{fho}
Fukuda, M. and Holevo, A.S.
{\it Complementarity and additivity for covariant channels.}
Quant. Info. Proc., 5, 179--207 (2006).

\bibitem{kci}
Kraus, B., Cirac, J. I.
{\it Optimal creation of entanglement using a two-qubit gate.}
Physical Review A, 63(6), 062309 (2001).

\bibitem{tys}
Tyson, J. E.
{\it Operator-Schmidt decompositions and the Fourier transform, with applications to the operator-Schmidt numbers of unitaries.}
Journal of Physics A: Mathematical and General, 36(39), 10101 (2003).

\bibitem{nie}
Nielsen, M. A., Dawson, C. M., Dodd, J. L., Gilchrist, A., Mortimer, D., Osborne, T. J., Bremner, M. J., Harrow, A. W., Hines, A.
{\it Quantum dynamics as a physical resource.}
Physical Review A, 67(5), 052301 (2003).


\end{thebibliography}
\end{document}